\def\<{\langle}
\def\>{\rangle}
\newcommand{\Tr}{\mathrm{Tr}}
\def\oper{{\mathchoice{\rm 1\mskip-4mu l}{\rm 1\mskip-4mu l}
{\rm 1\mskip-4.5mu l}{\rm 1\mskip-5mu l}}}
\DeclareMathAlphabet\mathbfcal{OMS}{cmsy}{b}{n}
\newtheorem{Theorem}{Theorem}
\newtheorem{Remark}{Remark}
\begin{document}

\title{Generalization of Pauli channels through mutually unbiased measurements}

\author{Katarzyna Siudzi{\'n}ska}
\affiliation{Institute of Physics, Faculty of Physics, Astronomy and Informatics \\  Nicolaus Copernicus University, Grudzi\k{a}dzka 5/7, 87--100 Toru{\'n}, Poland}

\begin{abstract}
We introduce a new generalization of the Pauli channels using the mutually unbiased measurement operators. The resulting channels are bistochastic but their eigenvectors are not unitary. We analyze the channel properties, such as complete positivity, entanglement breaking, and multiplicativity of maximal output purity. We illustrate our results with the maps constructed from the Gell-Mann matrices and the Heisenberg-Weyl observables.
\end{abstract}

\flushbottom

\maketitle

\thispagestyle{empty}

\section{Introduction}

The concept of mutual unbiasedness was first considered in regard to orthonormal vector bases. Two orthonormal bases are called {\it mutually unbiased} if the probability of transition between any of their vectors is constant. The $d$-dimensional Hilbert space admits at most $d+1$ mutually unbiased bases (MUBs), and the maximum is reached for $d$ being a prime power \cite{Wootters, Ivonovic}. In any dimension $d$, one can always construct at least three MUBs \cite{MUB-2}. A new approach to unbiasedness has been introduced by Kalev and Gour \cite{Kalev}, who generalized the notion of mutually unbiased bases to the mutually unbiased measurements (MUMs). These are the sets of positive operators that sum up to identity and contain the projectors onto MUB vectors as a special case of projective measurements. Interestingly, one can always construct $d+1$ MUMs, regardless of the dimension $d$.

The applications of mutually unbiased measurements have been widely studied in uncertainty relations and entanglement detection. In particular, the MUMs were used to derive state-dependent \cite{ChenFei}, state-independent \cite{Rastegin2}, and fine-grained \cite{Rastegin} entropic uncertainty relations. The last type helped to find new separability conditions for bipartite system \cite{Rastegin4}. Moreover, it was shown that there is an equality between the amounts of uncertainty for MUMs and entanglement of the measured states quantified by the conditional collision entropy \cite{Wang}. New separability criteria were given for arbitrary $d$-dimensional bipartite \cite{ChenMa,Shen,ShenLi} and multipartite systems \cite{Liu,ChenLi}. Liu, Gao, and Yan \cite{Liu2} provided the criteria whose experimental implementation does not require a full state tomography. In another paper \cite{Liu3}, they also presented the conditions for $k$-nonseparability detection of multipartite qudit systems. Graydon and Appleby generalized projective 2-designs to conical t-designs \cite{Graydon} and applied them to describe a connection between designs and entanglement \cite{Graydon2}. Recently, the MUMs have also been used to find more operational Einstein-Podolsky-Rosen (EPR) steering inequalities \cite{Lai}. Finally, Li et al. \cite{Li} used the MUMs to introduce new positive quantum maps and entanglement witnesses, which generalize the constructions from \cite{MUBs}.

In this paper, we construct a new class of bistochastic quantum channels. These channels generalize Nathanson and Ruskai's {\it diagonal Pauli channels constant on axes} (also known as {\it generalized Pauli channels}), whose definition includes the mutually unbiased bases. Our construction method uses the mutually unbiased measurement operators. It is valid for any finite dimension $d$, as one can always find the maximal number of $d+1$ MUMs. We find the necessary and sufficient conditions for complete positivity of the channels. We also analyze how the properties of Nathanson and Ruskai's channels change after replacing the MUBs with MUMs. Finally, we provide examples of bistochastic channels whose eigenvectors are not unitary operators.

\section{Mutually unbiased measurements}

Following the work by Kalev and Gour \cite{Kalev}, let us introduce the notion of mutually unbiasedness for measurement operators.
In quantum mechanics, a measurement is determined by a set of measurement operators (POVMs) $M_k$ that are positive and sum up to identity, $\{M_k|M_k\geq 0,\sum_kM_k=\mathbb{I}_d\}$. The probability of the $k$-th outcome is $\Tr(M_k\rho)$, where $\rho$ is the density operator of a quantum system. As a special case, one considers the projective measurement, where the measurement operators $M_k$ are orthogonal projectors. Clearly, mutually unbiased bases can be used to perform projective measurements. Consider $N$ orthonormal bases $\{\psi_k^{(\alpha)},k=0,\ldots,d-1\}$ in $\mathbb{C}^d$ that are numbered by $\alpha=1,\ldots,N$. These bases are mutually unbiased if and only if $|\<\psi_k^{(\alpha)}|\psi_l^{(\beta)}\>|^2=1/d$ for $\alpha\neq\beta$. Therefore, the set of projectors $P_k^{(\alpha)}=|\psi_k^{(\alpha)}\>\<\psi_k^{(\alpha)}|$ onto the $\alpha$-th basis forms a measurement with the following properties,
\begin{equation}\label{MUB}
\begin{split}
&\Tr(P_k^{(\alpha)})=1,\\
&\Tr(P_k^{(\alpha)}P_l^{(\beta)})=\delta_{\alpha\beta}\delta_{kl}+\frac 1d (1-\delta_{\alpha\beta}).
\end{split}
\end{equation}
Note that $P_k^{(\alpha)}$ can be regarded as either states or measurement operators. Therefore, two projective measurements are mutually unbiased if the probability of measuring one with the other is constant. This notion can be generalized to POVMs. Namely, the measurements $\{P_k^{(\alpha)}|P_k^{(\alpha)}\geq 0,\sum_{k=0}^{d-1}P_k^{(\alpha)}=\mathbb{I}_d\}$ are mutually unbiased if and only if \cite{Kalev}
\begin{equation}\label{MUM}
\begin{split}
&\Tr(P_k^{(\alpha)})=1,\\
&\Tr(P_k^{(\alpha)}P_l^{(\beta)})=\frac 1d +\frac{d\kappa-1}{d-1}\delta_{\alpha\beta}\left(\delta_{kl}
-\frac 1d \right),
\end{split}
\end{equation}
where $1/d<\kappa\leq 1$. For $\kappa=1$, the above conditions reproduce eq. (\ref{MUB}). It is important to note that one can always construct the maximal number of $d+1$ mutually unbiased measurements. Moreover, MUMs form an informationally complete set, and any state can be written as
\begin{equation}\label{rho}
\rho=\frac 1d \mathbb{I}_d+ \frac{d-1}{d\kappa-1}\sum_{\alpha=1}^{d+1}\sum_{k=0}^{d-1}P_k^{(\alpha)}\left[\Tr\left(\rho P_k^{(\alpha)}\right)-\frac 1d\right].
\end{equation}
In their seminal paper, Kalev and Gour \cite{Kalev} proposed a method of constructing $d+1$ MUMs from an orthonormal basis $\{\mathbb{I}_d/\sqrt{d},F_{\alpha,k}|\alpha=1,\ldots,d+1,k=1,\ldots,d-1\}$, where $F_{\alpha,k}$ are traceless Hermitian operators. Namely, one has
\begin{equation}
P_k^{(\alpha)}=\frac 1d \mathbb{I}_d+tF_k^{(\alpha)},
\end{equation}
where
\begin{equation}
F_k^{(\alpha)}=\begin{cases}
\sum_{l=1}^{d-1}F_{\alpha,l}-\sqrt{d}(1+\sqrt{d})F_{\alpha,k},\, &k\neq 0,\\
(1+\sqrt{d})\sum_{l=1}^{d-1}F_{\alpha,l},\, & k=0,
\end{cases}
\end{equation}
and $t\neq 0$ is a free real parameter such that $P_k^{(\alpha)}\geq 0$. The relation between $t$ and $\kappa$ reads
\begin{equation}\label{t}
\kappa=\frac 1d +(d-1)t^2(1+\sqrt{d})^2.
\end{equation}

\section{Generalization of Pauli channels}

A mixed unitary evolution of a qubit is described by the Pauli channel
\begin{equation}\label{Pauli}
\Lambda[\rho]=\sum_{\alpha=0}^3 p_\alpha \sigma_\alpha \rho \sigma_\alpha,
\end{equation}
which is the most general form of a bistochastic quantum channel \cite{King,Landau}. In the above formula, $p_\alpha$ denote the probability distribution, and $\sigma_\alpha$ are the Pauli matrices. One has
\begin{equation}
\Lambda[\sigma_\alpha]=\lambda_\alpha\sigma_\alpha,
\end{equation}
where $\lambda_0=1$ and
\begin{equation}
\lambda_\alpha = 2(p_0 + p_\alpha) - 1
\end{equation}
for $\alpha=1,2,3$. Now, $\Lambda$ is completely positive if and only if its eigenvalues $\lambda_\alpha$ satisfy the Fujiwara-Algoet conditions \cite{Fujiwara,King,Szarek}
\begin{equation}\label{Fuji-2}
-1 \leq \sum_{\alpha=1}^{3} \lambda_\alpha\leq 1+2\min_\alpha\lambda_\alpha.
\end{equation}

An interesting feature of the Pauli channels is that the eigenvectors of their Kraus operators $\sigma_\alpha$ are mutually unbiased. This property was used by Nathanson and Ruskai \cite{Ruskai} to introduce the generalized Pauli channels
\begin{equation}\label{GPC}
\Lambda=\frac{dp_0-1}{d-1}\oper+\frac{d}{d-1}\sum_{\alpha=1}^{d+1}p_\alpha\Phi_\alpha,
\end{equation}
where $p_\alpha$ is the probability distribution, $\oper$ denotes the identity map, and
\begin{equation}\label{qc}
\Phi_\alpha[X]=\sum_{k=0}^{d-1}P_k^{(\alpha)}\Tr(XP_k^{(\alpha)})
\end{equation}
are the quantum-classical channels constructed from the projectors $P_k^{(\alpha)}$ onto the MUB vectors. It has been shown that \cite{mub_final}
\begin{equation}\label{prop}
\begin{split}
&\Phi_\alpha\Phi_\beta=\Phi_0,\quad\alpha\neq\beta,\\
&\Phi_\alpha\Phi_\alpha=\Phi_\alpha,\\
&\sum_{\alpha=1}^{d+1}\Phi_\alpha=d\Phi_0+\oper,
\end{split}
\end{equation}
where $\Phi_0[X]=\mathbb{I}_d\Tr(X)/d$ is the completely depolarizing channel. Now, the eigenvalue equations for $\Lambda$ read $\Lambda[\mathbb{I}_d]=\mathbb{I}_d$ and
\begin{equation}
\Lambda[U_{\alpha,k}]=\lambda_\alpha U_{\alpha,k}
\end{equation}
with the unitary operators
\begin{equation}\label{UU}
U_{\alpha,k}=\sum_{l=0}^{d-1}\omega^{kl}P_l^{(\alpha)},\qquad \omega=e^{2\pi i/d}.
\end{equation}
This indicates that $\Lambda$ is a bistochastic channel. It is also self-dual ($\Lambda=\Lambda^\dagger$), so its $(d-1)$-times degenerated eigenvalues 
\begin{equation}
\lambda_\alpha=\frac{1}{d-1}\left[d(p_\alpha+p_0)-1\right]
\end{equation}
are real. Note that the generalized Pauli channels can be equivalently written as
\begin{equation}\label{GPC2}
\Lambda=p_0\oper+\frac{1}{d-1}\sum_{\alpha=1}^{d+1}p_\alpha\mathbb{U}_\alpha,
\end{equation}
where
\begin{equation}\label{U}
\mathbb{U}_\alpha[X]=\sum_{k=1}^{d-1}U_{\alpha,k}XU_{\alpha,k}^\dagger
\end{equation}
satisfies $\mathbb{U}_\alpha=d\Phi_\alpha-\oper$.
The complete positivity conditions for $\Lambda$ are the generalized Fujiwara-Algoet conditions \cite{Ruskai}
\begin{equation}\label{FA}
-\frac{1}{d-1}\leq\sum_{\alpha=1}^{d+1}\lambda_\alpha\leq 1+d\min_\alpha\lambda_\alpha.
\end{equation}

Our goal is to generalize the generalized Pauli channels from eq. (\ref{GPC}). We achieve this by replacing the mutually unbiased bases with mutually unbiased measurements. After this procedure, the bistochastic quantum-classical channels from eq. (\ref{qc}) no longer satisfy properties (\ref{prop}) but instead
\begin{equation}
\begin{split}
&\Phi_\alpha\Phi_\beta=\Phi_0,\quad\alpha\neq\beta,\\
&\Phi_\alpha\Phi_\alpha[P_l^{(\beta)}]=\Phi_0,\quad\alpha\neq\beta,\\
&\Phi_\alpha\Phi_\alpha[P_l^{(\alpha)}]=\frac{d\kappa-1}{d-1}\Phi_\alpha[P_l^{(\alpha)}]\\
&\qquad\qquad\qquad+\frac{d(1-\kappa)}{d-1}\Phi_0[P_l^{(\alpha)}],\\
&\sum_{\alpha=1}^{d+1}\Phi_\alpha=\frac{d(d-\kappa)}{d-1}\Phi_0+\frac{d\kappa-1}{d-1}\oper.
\end{split}
\end{equation}
The generalized Pauli channel has the form (\ref{GPC}), and its eigenvectors are again $U_{\alpha,k}$ defined in eq. (\ref{UU}), as
\begin{equation}
\Phi_\alpha[U_{\beta,l}]=\frac{d\kappa-1}{d-1}\delta_{\alpha\beta}U_{\beta,l}.
\end{equation}
This time, however, there is no simple relation between $\mathbb{U}_\alpha$ and $\Phi_\alpha$. Now, $U_{\alpha,k}$ form an orthogonal basis with
\begin{equation}\label{norm}
\Tr\left[U_{\alpha,k}U_{\beta,l}^\dagger\right]=\frac{d(d\kappa-1)}{d-1}\delta_{\alpha\beta}\delta_{kl}.
\end{equation}
Observe that, in terms of the Hermitian orthonormal basis, these operators read
\begin{equation}\label{Uak}
U_{\alpha,k}=\sqrt{d}t\sum_{l=1}^{d-1}F_{\alpha,l}\left[1-(\sqrt{d}+1)\omega^{kl}\right],
\end{equation}
and they are no longer unitary by definition. Therefore, our method allows one to construct bistochastic channels whose eigenvectors are not unitary. 
The eigenvalues of $\Lambda$ are given by
\begin{equation}
\lambda_\alpha=\frac{1}{d-1}\left[d\left(p_0+\frac{d\kappa-1}{d-1}p_\alpha\right)-1\right],
\end{equation}
where the inverse relation is
\begin{equation*}
p_0=\frac{1}{d^2(d-\kappa)}\left[(d-1)^2\sum_{\alpha=1}^{d+1}\lambda_\alpha-d(d\kappa-1)+d^2-1\right],
\end{equation*}
\begin{equation*}
\begin{split}
p_\alpha=\frac{(d-1)^2}{d^2(d\kappa-1)(d-\kappa)}\Bigg[d\kappa-1&+d(d-\kappa)\lambda_\alpha
\\&-(d-1)\sum_{\beta=1}^{d+1}\lambda_\beta\Bigg].
\end{split}
\end{equation*}
Sufficient complete positivity conditions for the generalized Pauli channels read $p_0\leq 1/d$ and $p_\alpha\geq 0$, which is equivalent to
\begin{equation}
\frac{d\kappa-1}{d-1}\leq\sum_{\alpha=1}^{d+1}\lambda_\alpha\leq\frac{1}{d-1}
\left[d\kappa-1+d(d-\kappa)\min_\beta\lambda_\beta\right].
\end{equation}

\begin{Remark}
Note that the operators $U_{\alpha,k}$ can be used to generate $d+1$ mutually unbiased measurements. Namely, let us take any orthogonal basis $\{\mathbb{I}_d,U_{\alpha,k}\}$ with $U_{\alpha,k}^\dagger=U_{\alpha,d-k}$, a normalization given by eq. (\ref{norm}), and satisfying an additional condition $\mathbb{I}_d+\sum_{l=1}^{d-1}\omega^{-kl}U_{\alpha,l}\geq 0$. Then, by inverting formula (\ref{UU}), we get
\begin{equation}
P_k^{(\alpha)}=\frac 1d \left[\mathbb{I}_d+\sum_{l=1}^{d-1}\omega^{-kl}U_{\alpha,l}\right].
\end{equation}
It is straightforward to check that such $P_k^{(\alpha)}$ form a POVM and satisfy conditions (\ref{MUM}). If for $U_{\alpha,k}$ one chooses the operators that form a cyclic subgroup, then $U_{\alpha,k}$ are linearly proportional to the Weyl operators $W_{kl}=\sum_{m=0}^{d-1}\omega^{mk}|m+l\>\<m|$,
and $P_k^{(\alpha)}$ are rank-1 projectors onto the mutually unbiased bases.
\end{Remark}

\section{Properties}

In this section, we show which properties of the generalized Pauli channels constructed from MUBs transfer over after one replaces MUBs with MUMs. First, in general, the generalized Pauli channels $\Lambda$ are no longer covariant with respect to all $U_{\alpha,k}$ \cite{ICQC}, which means that
\begin{equation}
\Lambda[U_{\alpha,k}XU_{\alpha,k}^\dagger]=U_{\alpha,k}\Lambda[X]U_{\alpha,k}^\dagger
\end{equation}
does not hold for an arbitrary operator $X$ \cite{Scutaru}. However, one still has $\Lambda\Phi_\alpha=\Phi_\alpha\Lambda$ for every $\alpha=1,\ldots,d+1$. Next, the necessary and sufficient conditions $\sum_{\alpha=1}^{d+1}\lambda_\alpha\leq 1$ for entanglement breaking of $\Lambda$ with $\lambda_\alpha\geq 0$ \cite{Ruskai} become only necessary. New sufficient conditions are established in the following theorem.

\begin{Theorem}\label{TH}
The generalized Pauli channel with $\lambda_\alpha\geq 0$ is entanglement breaking if $\sum_{\alpha=1}^{d+1}\lambda_\alpha\leq\frac{d\kappa-1}{d-1}$.
\end{Theorem}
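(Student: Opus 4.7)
The plan is to write $\Lambda$ as a convex combination of manifestly entanglement breaking channels and then invoke convexity of the set of entanglement breaking maps. Each quantum--classical channel $\Phi_\alpha$ from eq.~(\ref{qc}) is of measure--and--prepare form: it measures with the POVM $\{P_k^{(\alpha)}\}$ and prepares the density operator $P_k^{(\alpha)}$, which is a bona fide state since $\Tr P_k^{(\alpha)}=1$ by~(\ref{MUM}); hence by the Horodecki criterion $\Phi_\alpha$ is entanglement breaking. The completely depolarizing channel $\Phi_0$ is trivially entanglement breaking.

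I would then posit the ansatz
\begin{equation*}
\Lambda \;=\; q_0\,\Phi_0+\sum_{\alpha=1}^{d+1}q_\alpha\,\Phi_\alpha
\end{equation*}
and pin down the coefficients by matching both sides on the orthogonal operator basis $\{\mathbb{I}_d,U_{\alpha,k}\}$ supplied by~(\ref{norm}) (the dimension count $1+(d+1)(d-1)=d^2$ ensures completeness). Because $U_{\alpha,k}$ is traceless by~(\ref{Uak}), one has $\Phi_0[U_{\alpha,k}]=0$ and $\Phi_\beta[U_{\alpha,k}]=\frac{d\kappa-1}{d-1}\delta_{\alpha\beta}U_{\alpha,k}$, while every $\Phi_\gamma$ fixes $\mathbb{I}_d$. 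Comparing with $\Lambda[\mathbb{I}_d]=\mathbb{I}_d$ and $\Lambda[U_{\alpha,k}]=\lambda_\alpha U_{\alpha,k}$ forces
\begin{equation*}
q_\alpha=\frac{d-1}{d\kappa-1}\,\lambda_\alpha,\qquad q_0=1-\frac{d-1}{d\kappa-1}\sum_{\beta=1}^{d+1}\lambda_\beta.
\end{equation*}

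The two hypotheses of the theorem then do exactly what is needed: $\lambda_\alpha\geq 0$ yields $q_\alpha\geq 0$ for every $\alpha=1,\dots,d+1$, while $\sum_\beta\lambda_\beta\leq\frac{d\kappa-1}{d-1}$ yields $q_0\geq 0$; by construction $q_0+\sum_\alpha q_\alpha=1$. Thus $\Lambda$ is a convex combination of entanglement breaking channels and hence itself entanglement breaking. The only non--routine step is the spectral computation above, specifically confirming that in the MUM setting $\Phi_0$ still annihilates each $U_{\alpha,k}$ (which rests on the tracelessness visible in~(\ref{Uak})) and that the eigenvalue of $\Phi_\alpha$ on its own $U_{\alpha,k}$ is precisely the MUM--deformed value $\frac{d\kappa-1}{d-1}$ already recorded in the excerpt; once this is in hand, the threshold $\frac{d\kappa-1}{d-1}$ drops out immediately from the condition $q_0\geq 0$.
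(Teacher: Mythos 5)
Your proof is correct and follows essentially the same route as the paper: both hinge on rewriting $\Lambda$ as the convex combination $\bigl(1-\sum_\alpha\mu_\alpha\bigr)\Phi_0+\sum_\alpha\mu_\alpha\Phi_\alpha$ with $\mu_\alpha=\frac{d-1}{d\kappa-1}\lambda_\alpha$ (the paper's eq.~(\ref{alt})), and the two hypotheses enter in exactly the same way, to make these weights nonnegative. The only cosmetic difference is that you finish by invoking convexity of the entanglement-breaking class together with the measure-and-prepare form of each $\Phi_\alpha$, whereas the paper assembles the single explicit Holevo POVM $E_{k,\alpha}=\frac{1-\sum_\beta\mu_\beta}{d(d+1)}\mathbb{I}_d+\mu_\alpha P_k^{(\alpha)}$ that your convex combination implicitly produces.
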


\begin{proof}
Recall that a quantum channel $\Lambda$ is entanglement breaking if and only if it can be written in the Holevo form $\Lambda[X]=\sum_j R_j\Tr(\rho E_j)$, where $R_j$ are density operators and $\{E_j\}$ form a POVM \cite{EBC}. Observe that the generalized Pauli channels can be equivalently rewritten as
\begin{equation}\label{alt}
\Lambda=\left(1-\sum_{\alpha=1}^{d+1}\mu_\alpha\right)\Phi_0
+\sum_{\alpha=1}^{d+1}\mu_\alpha\Phi_\alpha
\end{equation}
with
\begin{equation}
\mu_\alpha=\frac{d-1}{d\kappa-1}\lambda_\alpha.
\end{equation}
Now, it is easy to show that
\begin{equation}
\Lambda=\sum_{\alpha=1}^{d+1}\sum_{k=0}^{d-1}P_k^{(\alpha)}
\Tr\left[X E_{k,\alpha}\right]
\end{equation}
with
\begin{equation}
E_{k,\alpha}=\frac{1-\sum_{\beta=1}^{d+1}\mu_\beta}{d(d+1)}\mathbb{I}_d
+\mu_\alpha P_k^{(\alpha)}.
\end{equation}
By definition, $P_k^{(\alpha)}$ is a density operator, and $E_{k,\alpha}$ is a sum of two positive operators, so it is also positive. Finally,
\begin{equation}
\sum_{\alpha=1}^{d+1}\sum_{k=0}^{d-1}E_{k,\alpha}=\mathbb{I}_d,
\end{equation}
and hence $\{E_{k,\alpha}\}$ form a POVM.
\end{proof}

There are several important measures that have been analyzed for Nathanson and Ruskai's generalized Pauli channels. One of them is the minimal and maximal channel fidelity $f_{\min/\max}(\Lambda)$ on pure states \cite{Raginsky,norms} that measures the distance between the input $P$ and the output state $\Lambda[P]$. Unluckily, the formulas for $f_{\min/\max}(\Lambda)$ do not carry over to the generalized Pauli channels constructed from MUMs. The reason is that the extremal fidelities for the channels constant on axes are reached on the projectors onto MUB vectors. However, the mutually unbiased measurements are not projectors, and the formula for the channel fidelity on mixed states is much more complicated than on pure states \cite{Sommers3}. Another important measure is the maximal output $p$-norm $\nu_p(\Lambda)$ that measures optimal output purity. In other words, $\nu_p(\Lambda)$ determines how close the channel output $\Lambda[\rho]$ is to a pure state. If $\Lambda$ is constructed from MUBs, the analytical formulas for $\nu_2(\Lambda)$ and $\nu_\infty(\Lambda)$ are known \cite{Ruskai,norms}. However, only the former generalizes in a straightforward manner.

\begin{Theorem}
The maximal output 2-norm of $\Lambda$ is equal to
\begin{equation}\label{nu}
\nu_2^2(\Lambda)=\frac 1d \left[1+(d\kappa-1)\max_\alpha\lambda_\alpha^2\right].
\end{equation}
\end{Theorem}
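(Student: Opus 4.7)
The plan is to diagonalise $\Lambda$ in an orthonormal operator basis, reduce to pure inputs by convexity, and then perform a constrained maximisation. Since $\rho\mapsto\Tr(\Lambda[\rho]^2)$ is convex, the supremum $\nu_2$ is attained on pure states. By (\ref{norm}), the rescaled operators $\tilde U_{\alpha,k}:=\sqrt{(d-1)/[d(d\kappa-1)]}\,U_{\alpha,k}$ together with $\mathbb{I}_d/\sqrt{d}$ form an orthonormal basis of $M_d(\mathbb{C})$; moreover they are eigenvectors of $\Lambda$ with $\Lambda[\mathbb{I}_d]=\mathbb{I}_d$ and $\Lambda[\tilde U_{\alpha,k}]=\lambda_\alpha\tilde U_{\alpha,k}$. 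Expanding $\rho=\mathbb{I}_d/d+\sum_{\alpha,k}c_{\alpha,k}\tilde U_{\alpha,k}$ and applying Parseval gives
\begin{equation*}
\|\Lambda[\rho]\|_2^2 = \frac{1}{d}+\sum_{\alpha,k}\lambda_\alpha^2|c_{\alpha,k}|^2,
\end{equation*}
with the pure-state condition $\Tr(\rho^2)=1$ translating into $\sum_{\alpha,k}|c_{\alpha,k}|^2=(d-1)/d$.

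To realise the value on the right-hand side of (\ref{nu}), I would evaluate the formula at $\rho=P_l^{(\alpha^\star)}$ for an index $\alpha^\star$ with $\lambda_{\alpha^\star}^2=\max_\alpha\lambda_\alpha^2$; each $P_l^{(\alpha)}$ is a legitimate density matrix since $P_l^{(\alpha)}\geq 0$ and $\Tr P_l^{(\alpha)}=1$. Using $U_{\alpha,k}=\sum_n\omega^{kn}P_n^{(\alpha)}$ together with the MUM trace identities (\ref{MUM}) and orthogonality between distinct MUM labels, a direct computation shows $c_{\beta,k}=0$ for $\beta\neq\alpha^\star$ and $\sum_k|c_{\alpha^\star,k}|^2=(d\kappa-1)/d$, which by the Parseval formula above yields exactly $\|\Lambda[P_l^{(\alpha^\star)}]\|_2^2=\tfrac{1}{d}[1+(d\kappa-1)\lambda_{\alpha^\star}^2]$ and hence $\nu_2^2(\Lambda)\geq\tfrac{1}{d}[1+(d\kappa-1)\max_\alpha\lambda_\alpha^2]$.

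The matching upper bound is the main obstacle. Setting $s_\alpha:=\sum_k|c_{\alpha,k}|^2$, purity alone only gives $\sum_\alpha s_\alpha=(d-1)/d$, and the crude estimate $\sum_\alpha\lambda_\alpha^2 s_\alpha\leq(d-1)/d\cdot\max_\alpha\lambda_\alpha^2$ is too loose to recover the coefficient $(d\kappa-1)$. One must exploit the positivity $\rho\geq 0$ together with (\ref{MUM}). My strategy would be to invert (\ref{UU}) to express $c_{\alpha,k}$ as a discrete Fourier transform of the POVM probabilities $q_n^{(\alpha)}=\Tr(\rho P_n^{(\alpha)})$, rewrite $s_\alpha$ as a linear function of the second moments $\sum_n(q_n^{(\alpha)})^2$, and then prove the per-eigenspace bound $s_\alpha\leq(d\kappa-1)/d$ by arguing that, among states whose traceless part is supported in the span of $\{U_{\alpha,k}\}_{k=1}^{d-1}$, the MUM operators $P_l^{(\alpha)}$ are extremal. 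This extremality claim has no counterpart in the MUB case of \cite{Ruskai}, where the rank-one MUB projectors trivially achieve $(d-1)/d$, and isolating it rigorously in the MUM setting — using the non-projectivity of the $P_l^{(\alpha)}$ and their purity $\|P_l^{(\alpha)}\|_2^2=\kappa$ — is the delicate calculation on which the theorem hinges.
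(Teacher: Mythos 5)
Your lower-bound half is correct and is essentially the paper's entire computation: the paper derives $\|\Lambda[\rho]\|_2^2=\frac{1}{d}+\frac{d-1}{d\kappa-1}\sum_\alpha\lambda_\alpha^2\bigl[\sum_k(\Tr\rho P_k^{(\alpha)})^2-\frac{1}{d}\bigr]$ and evaluates it at $\rho=P_l^{(\alpha_\ast)}$, which is exactly your Parseval identity rewritten in terms of the POVM probabilities. The genuine gap is the one you flagged yourself: the matching upper bound. Unfortunately the route you sketch cannot be completed, because the per-eigenspace extremality claim is false. Since $\{U_{\alpha,k}\}_{k=1}^{d-1}$ and $\{F_{\alpha,l}\}_{l=1}^{d-1}$ span the same subspace (eq. (\ref{Uak})), and the $F_{\alpha,l}$ are orthonormal, one has the exact identity $s_\alpha=\sum_{l=1}^{d-1}(\Tr\rho F_{\alpha,l})^2$, i.e. $s_\alpha$ is the squared Hilbert--Schmidt norm of the projection of $\rho-\mathbb{I}_d/d$ onto the $\alpha$-th block of the operator basis. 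A pure state whose traceless part lies entirely in one block --- e.g. $\rho=|0\>\<0|$ for the diagonal Gell-Mann block $\alpha=d+1$ --- gives $s_{d+1}=1-\frac{1}{d}$, which exceeds your claimed bound $(d\kappa-1)/d$ whenever $\kappa<1$. The $P_l^{(\alpha)}$ are not extremal within their block precisely because they are not rank one; no amount of exploiting $\rho\geq 0$ will recover the coefficient $d\kappa-1$.

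You should also know that the paper does not close this gap either: its proof simply ``observes'' that $\|\Lambda[\rho]\|_2^2$ is maximized at $\rho=P_l^{(\alpha_\ast)}$, with no argument, and for $\kappa<1$ that observation is wrong. The simplest counterexample is $p_0=1$, which gives $\Lambda=\oper$ and all $\lambda_\alpha=1$: eq. (\ref{nu}) then asserts $\nu_2^2(\oper)=\kappa<1$, whereas any pure input already yields output $2$-norm equal to $1$. The same concentration phenomenon breaks the formula inside the paper's sufficient complete-positivity region: for $\Lambda=\Phi_{d+1}$ (i.e. $p_0=1/d$, $p_{d+1}=(d-1)/d$) the input $|0\>\<0|$ gives $\frac{1}{d}\bigl[1+(d-1)\lambda_{d+1}^2\bigr]$, strictly larger than the right-hand side of (\ref{nu}). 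What the Parseval decomposition together with $\sum_\alpha s_\alpha=\Tr\rho^2-\frac{1}{d}\leq\frac{d-1}{d}$ actually proves is only $\nu_2^2(\Lambda)\leq\frac{1}{d}\bigl[1+(d-1)\max_\alpha\lambda_\alpha^2\bigr]$, while (\ref{nu}) is a lower bound attained on the MUM operators. So your instinct that the upper bound is ``the delicate calculation on which the theorem hinges'' was exactly right --- it is the step at which both your proposal and the paper's own proof fail, and the theorem as stated does not hold for $\kappa<1$.
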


\begin{proof}
Starting from eq. (\ref{alt}), we calculate the Frobenius norm
\begin{equation*}
\begin{split}
\|\Lambda[\rho]\|_2^2&=\Tr(\Lambda[\rho]^2)\\&=\frac 1d +\frac{d-1}{d\kappa-1}
\sum_{\alpha=1}^{d+1}\lambda_\alpha^2\left[\sum_{k=0}^{d-1}\left(\Tr\rho P_k^{(\alpha)}\right)^2-\frac 1d\right].
\end{split}
\end{equation*}
Observe that $\|\Lambda[\rho]\|_2^2$ achieves its maximal value for $\rho=P_l^{(\alpha_\ast)}$, where $\lambda_{\alpha_\ast}=\sqrt{\max_\alpha\lambda_\alpha^2}$. We find that
\begin{equation}
\sum_{k=0}^{d-1}\left(\Tr P_k^{(\alpha)} P_l^{(\beta)}\right)^2=
\frac 1d \left[1+\frac{(d\kappa-1)^2}{d-1}\delta_{\alpha\beta}\right],
\end{equation}
and hence eq. (\ref{nu}) follows.
\end{proof}

By definition, the maximal output $2$-norm of $\Lambda$ is strongly multiplicative if $\nu_2(\Lambda\otimes\Omega)=\nu_2(\Lambda)\nu_2(\Omega)$ for any quantum channel $\Omega$. From the theorem by Fukuda, Holevo, and Nathanson \cite{FukHol,KingMats}, it follows that $\nu_2(\Lambda)$ is strongly multiplicative if and only if
\begin{equation}
\nu_2^2(\Lambda)=\frac 1d \left[1+(d-1)\max_\alpha|\lambda_\alpha|^2\right].
\end{equation}
Observe that $\nu_2(\Lambda)$ in eq. (\ref{nu}) satisfies the above condition only for $\kappa=1$.

\section{Examples}

\subsection{Pauli matrices}

Let us consider $d=2$ and construct the mutually unbiased measurements from the rescaled Pauli matrices $F_{1,\alpha}=\sigma_\alpha/\sqrt{2}$. Now, using eq. (\ref{Uak}), we find that the MUMs are given by
\begin{equation}
P_k^{(\alpha)}=\frac 1d \mathbb{I}_d+(-1)^k t\frac{1+\sqrt{2}}{\sqrt{2}}\sigma_\alpha,
\end{equation}
where
\begin{equation}
-\frac{\sqrt{2}-1}{\sqrt{2}}\leq t\leq\frac{1}{2+\sqrt{2}}
\end{equation}
and $t\neq 0$. Interestingly, the corresponding parameter $\kappa$ belongs to the full range $1/2<\kappa\leq 1$. The channel eigenvectors $U_{\alpha,1}$ are again rescaled Pauli matrices,
\begin{equation}
U_{\alpha,1}=P_0^{(\alpha)}-P_1^{(\alpha)}=2t\frac{1+\sqrt{2}}{\sqrt{2}}\sigma_\alpha.
\end{equation}
Therefore, regardless of the choice of $\kappa$, $\Lambda$ is the Pauli channel.

\subsection{Gell-Mann matrices}

Now, consider the higher-dimensional ($d\geq 3$) Hermitian generalization of the Pauli matrices, known as the Gell-Mann matrices. They are defined as follows,
\begin{align}
&\sigma_{kl}:=\frac{1}{\sqrt{2}}\left(|k\>\<l|+|l\>\<k|\right),\\
&\sigma_{lk}:=\frac{i}{\sqrt{2}}\left(|k\>\<l|-|l\>\<k|\right),\\
&\sigma_{kk}:=\sqrt{\frac{1}{k(k+1)}}\left(\sum_{j=0}^{k-1}|j\>\<j|-k|k\>\<k|\right),
\end{align}
for all $0\leq k<l\leq d-1$ and $0\leq k\leq d-1$, respectively. Together with $\sigma_{00}=\mathbb{I}_d/\sqrt{d}$, they form an orthonormal Hermitian operator basis. After Kalev and Gour \cite{Kalev}, we group the operators into the following sets,
\begin{align*}
&\{F_{\alpha,k}|k=1,\ldots,d-1\}=\{\sigma_{k,\alpha-1}|k\neq\alpha-1\},\\
&\{F_{d+1,k}|k=1,\ldots,d-1\}=\{\sigma_{kk}|k=1,\ldots,d-1\}.
\end{align*}
In the next step, we construct
\begin{align}
&F_0^{(d+1)}=(\sqrt{d}+1)\sum_{l=1}^{d-1}\sigma_{ll},\label{od}\\
&F_k^{(d+1)}=-\sqrt{d}(\sqrt{d}+1)\sigma_{kk}+\sum_{l=1}^{d-1}\sigma_{ll}
\end{align}
for $k=1,\ldots,d-1$, as well as
\begin{align}
&F_k^{(\alpha)}=-\sqrt{d}(\sqrt{d}+1)\sigma_{k,\alpha-1}+\sum_{l\neq\alpha-1}\sigma_{l,\alpha-1},\\
&F_{\alpha-1}^{(\alpha)}=(\sqrt{d}+1)\sum_{l\neq\alpha-1}\sigma_{l,\alpha-1}\label{do}
\end{align}
for $k\neq\alpha-1$ and $\alpha=1,\ldots,d$.
The corresponding MUMs are given by
\begin{equation}\label{P}
P_k^{(\alpha)}=\frac 1d \mathbb{I}_d+tF_k^{(\alpha)},
\end{equation}
where $t$ is a free parameter such that $P_k^{(\alpha)}\geq 0$. Let us take the optimal
\begin{equation}
t=\frac{\sqrt{2}}{d(\sqrt{d}+1)\sqrt{d-1}},
\end{equation}
which follows from the optimal $\kappa=\frac{d+2}{d^2}$ via eq. (\ref{t}) and guarantees that $P_k^{(\alpha)}\geq 0$ \cite{Kalev}. One easily finds
\begin{equation}
U_{d+1,k}=\sqrt{d}t\sum_{l=1}^{d-1}\sigma_{ll}\left[1-(\sqrt{d}+1)\omega^{kl}\right],
\end{equation}
\begin{equation}
\begin{split}
U_{\alpha,k}=&\sqrt{d}t\sum_{l=0}^{\alpha-2}\sigma_{l,\alpha-1}\left[1-(\sqrt{d}+1)\omega^{k(l+1)}\right]\\&
+\sqrt{d}t\sum_{l=\alpha}^{d-1}\sigma_{l,\alpha-1}\left[1-(\sqrt{d}+1)\omega^{kl}\right].
\end{split}
\end{equation}
For $d=3$, these operators explicitly read
\begin{equation}
U_{1,1}=U_{1,2}^\dagger=\frac{3}{2\sqrt{2}} (1+\sqrt{3})t
\begin{pmatrix}
0 & -1-i & 1-i\\
1+i & 0 & 0 \\
-1+i & 0 & 0
\end{pmatrix},
\end{equation}
\begin{equation}
U_{2,1}=U_{2,2}^\dagger=\frac{3}{2\sqrt{2}} (1+\sqrt{3})t
\begin{pmatrix}
0 & 1-i & 0\\
1-i & 0 & 1-i \\
0 & -1+i & 0
\end{pmatrix},
\end{equation}
\begin{equation}
U_{3,1}=U_{3,2}^\dagger=\frac{3}{2\sqrt{2}} (1+\sqrt{3})t
\begin{pmatrix}
0 & 0 & 1+i\\
0 & 0 & 1-i \\
1+i & 1-i & 0
\end{pmatrix},
\end{equation}
\begin{equation}
U_{4,1}=U_{4,2}^\dagger=\sqrt{3}t\,\mathrm{diag}
\begin{pmatrix}
2+\sqrt{3}-i \\
i(2+\sqrt{3}+i) \\
-(1+i)(1+\sqrt{3})
\end{pmatrix}^T.
\end{equation}
Observe that, among the above $U_{\alpha,k}$, the only operators that become unitary after rescaling are $U_{4,1}$ and $U_{4,2}$. These are also the only ones that mutually commute. Therefore, the generalized Pauli channel $\Lambda$ constructed from the Gell-Mann matrices is an example of a bistochastic channel whose eigenvectors are not unitary.

\subsection{Heisenberg-Weyl observables}

An alternative generalization of the Pauli matrices is provided by the Weyl operators
\begin{equation}
W_{kl}=\sum_{m=0}^{d-1}\omega^{km}|m+l\>\<m|.
\end{equation}
They form an orthogonal unitary operator basis, so they cannot be used to generate mutually unbiased measurements. However, it is possible to use $W_{kl}$ in order to introduce a Hermitian basis. Consider the case with $d=3$ and let us define, after \cite{Asadian}, the Heisenberg-Weyl observables
\begin{align}
&V_{kl}=\frac{1-i}{2\sqrt{d}}W_{kl}+\frac{1+i}{2\sqrt{d}}W_{kl}^\dagger,\quad k\leq l,\\
&V_{kl}=\frac{1+i}{2\sqrt{d}}W_{kl}+\frac{1-i}{2\sqrt{d}}W_{kl}^\dagger,\quad k>l.
\end{align}
Notably, such defined $V_{kl}$ are orthonormal traceless Hermitian operators. Now, in analogy to the previous example, we group $V_{kl}$ into the sets
\begin{align*}
&\{F_{\alpha,k}|k=1,\ldots,d-1\}=\{V_{k,\alpha-1}|k\neq\alpha-1\},\\
&\{F_{d+1,k}|k=1,\ldots,d-1\}=\{V_{kk}|k=1,\ldots,d-1\},
\end{align*}
so that eqs. (\ref{od}--\ref{do}) still apply if one replaces $\sigma_{kl}$ with $V_{kl}$. Numerical calculations show that the optimal values of $t$ and $\kappa$ read
\begin{equation}
t\simeq 0.112,\qquad\kappa\simeq 0.522.
\end{equation}
For comparison, the optimal $\kappa$ for the MUMs constructed from the Gell-Mann operators in $d=3$ is $\kappa=5/9\simeq 0.556$. Therefore, the Heisenberg-Weyl operators are a worse choice for the operator basis, as their optimal parameter $\kappa$ is further from its maximal value $\kappa=1$. 

The mutually unbiased measurements are again given by eq. (\ref{P}), and the associated operators $U_{\alpha,k}$ in $d=3$ read
\begin{equation}
U_{1,1}=U_{1,2}^\dagger=\sqrt{3}(1+\sqrt{3})t
\begin{pmatrix}
1 & 0 & 0\\
0 & \omega & 0 \\
0 & 0 & \omega^2
\end{pmatrix},
\end{equation}
\begin{equation}
\begin{split}
U_{2,1}=&U_{2,2}^\dagger=\frac{\sqrt{3}}{2} (1+\sqrt{3})t\\
&\times\begin{pmatrix}
0 & i\omega^2(\sqrt{3}-1) & \omega(2i-\omega)\\
-\omega^2(1+i) & 0 & -i\omega^2 \\
-i(\omega^2+\sqrt{3}) & -(\omega^2+i) & 0
\end{pmatrix},
\end{split}
\end{equation}
\begin{equation}
\begin{split}
U_{3,1}=&U_{3,2}^\dagger=\frac{\sqrt{3}}{2} (1+\sqrt{3})t\\
&\times\begin{pmatrix}
0 & -\omega^2 & i(\sqrt{3}-1)\omega^2\\
-i(\omega^2+\sqrt{3}) & 0 & -(\omega^2+\sqrt{3}) \\
(\sqrt{3}-1)\omega^2 & -i\omega^2 & 0
\end{pmatrix},
\end{split}
\end{equation}
\begin{equation}
U_{4,1}=U_{4,2}^\dagger=\frac{\sqrt{3}}{2} (1+\sqrt{3})t
\begin{pmatrix}
0 & -\omega & \sqrt{3}\omega\\
\sqrt{3}\omega^2 & 0 & -1 \\
-\omega^2 & \sqrt{3} & 0
\end{pmatrix}.
\end{equation}
Once again, there are only two operators $U_{\alpha,k}$ that become unitary after rescaling: $U_{1,1}$ and $U_{1,2}$. Moreover, $U_{1,1}$ and $U_{1,2}$ are linearly proportional to the Weyl operators $W_{10}$ and $W_{20}$, respectively. This time, however, there are two pairs of mutually commuting operators: the aforementioned $\{U_{1,1},U_{1,2}\}$, but also $\{U_{4,1},U_{4,2}\}$. 
Therefore, the generalized Pauli channel $\Lambda$ constructed from the Heisenberg-Weyl observables is another example of a bistochastic channel with non-unitary eigenvectors.

\section{Conclusions}

We introduced a new generalization of the Pauli channels, whose construction is based on the concept of mutually unbiased measurements. The resulting channels are bistochastic, but they not necessarily have unitary eigenvectors. We found  sufficient conditions for the generalized Pauli channels to be completely positive, as well as the conditions under which they break quantum entanglement. Further work is needed to establish the necessary and sufficient conditions for complete positivity. Also, we showed that their maximal output $2$-norm is strongly multiplicative if and only if the MUMs are rank-1 projectors onto the MUB vectors.  The next step is to analyze the evolution of open quantum systems provided by the generalized Pauli dynamical maps. It would be interesting to check the $\kappa$-dependence of quantum Markovianity.

\section*{Acknowledgements}

This paper was supported by the Polish National Science Centre project No. 2018/31/N/ST2/00250. The data that supports the findings of this study are available within the article.

\bibliography{C:/Users/cynda/OneDrive/Fizyka/bibliography}
\bibliographystyle{C:/Users/cynda/OneDrive/Fizyka/beztytulow2}

\end{document}